\newtheorem{theorem}{Theorem}[section]
\newtheorem{lemma}[theorem]{Lemma}
\newcommand{\comm}[1]{}
\newcounter{tempEquationCounter} 
\newcounter{thisEquationNumber}
\newenvironment{floatEq}
{\setcounter{thisEquationNumber}{\value{equation}}\addtocounter{equation}{1}
\begin{figure*}[!t]
\normalsize\setcounter{tempEquationCounter}{\value{equation}}
\setcounter{equation}{\value{thisEquationNumber}}
}
{\setcounter{equation}{\value{tempEquationCounter}}
\hrulefill\vspace*{4pt}
\end{figure*}
}
\begin{document}

\title{ICE: A General and Validated Energy Complexity Model for Multithreaded Algorithms
(IFI-UiT Technical Report 2016-77)}

\author{Vi Ngoc-Nha Tran}
\author{Phuong Hoai Ha}
\affil{The Arctic Green Computing Group\\
Department of Computer Science\\
UiT The Arctic University of Norway\\
Tromso, Norway\\
\{vi.tran, phuong.hoai.ha\}@uit.no}

\date{September 23, 2016}
\maketitle

\thispagestyle{empty}

\begin{abstract}
Like time complexity models that have significantly contributed to the analysis and development of fast algorithms, energy complexity models for parallel algorithms are desired as crucial means to develop energy efficient algorithms for ubiquitous multicore platforms. Ideal energy complexity models should be validated on real multicore platforms and applicable to a wide range of parallel algorithms. However, existing energy complexity models for parallel algorithms are either theoretical without model validation or algorithm-specific without ability to analyze energy complexity for a wide-range of parallel algorithms.  

This paper presents a new general validated energy complexity model for parallel (multithreaded) algorithms. The new model abstracts away possible multicore platforms by their static and dynamic energy of computational operations and data access, and derives the energy complexity of a given algorithm from its {\em work}, {\em span} and {\em I/O} complexity. 
The new model is validated by different sparse matrix vector multiplication (SpMV) algorithms and dense matrix multiplication (matmul) algorithms running on high performance computing (HPC) platforms (e.g., Intel Xeon and Xeon Phi). The new energy complexity model is able to characterize and compare the energy consumption of SpMV and matmul kernels according to three aspects: different algorithms, different input matrix types and different platforms. The prediction of the new model regarding which algorithm consumes more energy with different inputs on different platforms, is confirmed by the experimental results. In order to improve the usability and accuracy of the new model for a wide range of platforms, the platform parameters of ICE model are provided for eleven platforms including HPC, accelerator and embedded platforms.
\end{abstract}

\section{Introduction}
Understanding the energy complexity of algorithms is crucial important to improve the energy efficiency of algorithms \cite{Umar2016, Umar20162, Umar:2015, Kumar:2016:EAM:2959355.2959420} and reduce the energy consumption of computing systems \cite{Tran2016, TranSamos2016, 7568416}. One of the main approaches to understand the energy complexity of algorithms is to devise energy models.
 
Significant efforts have been devoted to developing power and energy models in literature \cite{Alonso2014, Choi2013, Choi2014, Korthikanti2009, Korthikanti2010, 7108419, Mishra:2015, Snowdon:2009}. However, there are no analytic models for multithreaded algorithms that are both applicable to a wide range of algorithms and comprehensively validated yet (cf. Table \ref{table:energy-model-summary}). The existing {\em parallel} energy models are either theoretical studies without validation or only applicable for specific algorithms. Modeling energy consumption of {\em parallel} algorithms is difficult since the energy models must take into account the complexity of both parallel algorithms and parallel platforms. The algorithm complexity results from parallel computation, concurrent memory accesses and inter-process communication. The platform complexity results from multicore architectures with deep memory hierarchy.

The existing models and their classification are summarized in Table \ref{table:energy-model-summary}. 
To the best of our knowledge, the proposed ICE (Ideal Cache Energy) complexity model is the first energy model that covers all three aspects: i) ability to analyze the energy complexity of parallel algorithms (i.e. Energy complexity analysis for parallel algorithms), ii) applicability to a wide range of algorithms (i.e., Algorithm generality), and iii) model validation (i.e., Validation). Section \ref{related-work} describes how the ICE model complements the other currently used models.
\begin{table*}
\caption{Energy Model Summary}
\label{table:energy-model-summary}
\begin{center}
\begin{adjustbox}{width=1\textwidth}
\begin{tabular}{lllll}
\hline\noalign{\smallskip}
\textbf{Study} 	 &\textbf{Energy complexity} 	&\textbf{Algorithm}  	 &\textbf{Validation}  	 \\
	 &\textbf{analysis for} 	&\textbf{ generality}  		&  	\\
	 &\textbf{parallel algorithms} && 				\\
\noalign{\smallskip}\hline\noalign{\smallskip}
LEO \cite{Mishra:2015} 	&No 	&General	&Yes 	\\
POET \cite{7108419} 	&No 	&General	&Yes	\\
Koala \cite{Snowdon:2009} 	&No 	&General	&Yes	\\
Roofline \cite{Choi2013,Choi2014}	&No	&General	&Yes	\\  
Energy scalability \cite{Korthikanti2009, Korthikanti2010}	&Yes	&General	&No	\\ 
Sequential energy complexity \cite{Roy2013}   	&No	&General	&Yes\\
Alonso  et al. \cite{Alonso2014}	&Yes	&Algorithm-specific	&Yes	\\
Malossi  et al. \cite{Malossi2015}	&Yes	&Algorithm-specific	&Yes	\\
\textbf{ICE model (this study)}       	&\textbf{Yes}	&\textbf{General}	&\textbf{Yes}	\\
\noalign{\smallskip}\hline\noalign{\smallskip}
\end{tabular}
\end{adjustbox}
\end{center}
\raggedright{To the best of our knowledge, the ICE model is the first \em{validated} model that supports \em{energy complexity} analysis for \em{general multi-threaded} algorithms.}
\end{table*}

The energy complexity model ICE proposed in this study is for general multithreaded algorithms and validated on three aspects: different algorithms for a given problem, different input types and different platforms. The proposed model is an analytic model which characterizes both algorithms (e.g., representing algorithms by their {\em work}, {\em span} and {\em I/O} complexity) and platforms (e.g., representing platforms by their static and dynamic energy of memory accesses and computational operations). By considering {\em work}, {\em span} and {\em I/O} complexity, the new ICE model is applicable to any multithreaded algorithms. 

The new ICE model is designed for analyzing the energy {\em complexity} of algorithms and therefore the model does not provide the estimation of absolute energy consumption. The goal of the ICE model is to answer energy complexity question: {\em "Given two parallel algorithms A and B for a given problem, which algorithm consumes less energy analytically?"}. Hence, the details of underlying systems (e.g., runtime and architectures) are abstracted away to keep ICE model simple and suitable for complexity analysis. O-notation represents an {\em asymptotic upper-bound} on energy complexity.

In this work, the following contributions have been made.
\begin{itemize}
\item Devising a new general energy model ICE for analyzing the energy complexity of a wide range of multithreaded algorithms based on their {\em work}, {\em span} and {\em I/O} complexity (cf. Section \ref{energy-model}). The new ICE model abstracts away possible {\em multicore platforms} by their static and dynamic energy of computational operations and memory access. The new ICE model complements previous energy models such as energy roofline models \cite{Choi2013, Choi2014} that abstract away possible {\em algorithms} to analyze the energy consumption of different multicore platforms.
\item Conducting two case studies (i.e., SpMV and matmul) to demonstrate how to apply the ICE model to find energy complexity of parallel algorithms. The selected parallel algorithms for SpMV are three algorithms: Compressed Sparse Column(CSC), Compressed Sparse Block(CSB) and Compressed Sparse Row(CSR)(cf. Section \ref{SpMV-energy}). The selected parallel algorithms for matmul are two algorithms: a basic matmul algorithm and a cache-oblivious algorithm (cf. Section \ref{Matmul-energy}). 
\item Validating the ICE energy complexity model with both data-intensive (i.e., SpMV) and computation-intensive (i.e., matmul) algorithms according to three aspects: different algorithms, different input types and different platforms. The results show the precise prediction on which validated SpMV algorithm (i.e., CSB or CSC) consumes more energy when using different matrix input types from Florida matrix collection \cite{Davis:2011} (cf. Section \ref{SpMV-validation}). The results also show the precise prediction on which validated matmul algorithm (i.e., basic or cache-oblivious) consumes more energy (cf. Section \ref{Matmul-validation}). The model platform-related parameters for 11 platforms, including x86, ARM and GPU, are provided to facilitate the deployment of the ICE model.  
\end{itemize}

\section{ICE Shared Memory Machine Model}
\label{EPEM-model}
Generally speaking, the energy consumption of a parallel algorithm is the sum of i) static energy (or leakage) $E_{static}$, ii) dynamic energy of computation $E_{comp}$ and iii) dynamic energy of memory accesses $E_{mem}$. The static energy $E_{static}$ is proportional to the execution time of the algorithm while the dynamic energy of computation and the dynamic energy of memory accesses are proportional to the number of computational operations and the number of memory accesses of the algorithm, respectively \cite{Korthikanti2010}. As a result, in the new ICE complexity model, the energy complexity of a multithreaded algorithm is analyzed based on its {\em span complexity} \cite{CormenLRS:2009} (for the static energy), {\em work complexity} \cite{CormenLRS:2009} (for the dynamic energy of computation) and {\em I/O complexity} (for the dynamic energy of memory accesses) (cf. Section \ref{energy-model}). This section describes shared-memory machine models supporting I/O complexity analysis for parallel algorithms.

The first memory model we consider is parallel external memory (PEM) model \cite{Arge:2008}, an extension of the Parallel Random Access Machine (PRAM) model that includes a two-level memory hierarchy. 
In the PEM model, there are $n$ cores (or processors) each of which has its own {\em private} cache of size $Z$ (in bytes) and shares the main memory with the other cores (cf. Figure \ref{fig:machine-model}). 
When $n$ cores access $n$ distinct blocks from the shared memory {\em simultaneously}, the I/O complexity in the PEM model is $O(1)$ instead of $O(n)$.
\begin{figure}[!t] \centering
\resizebox{0.5\columnwidth}{!}{ \includegraphics{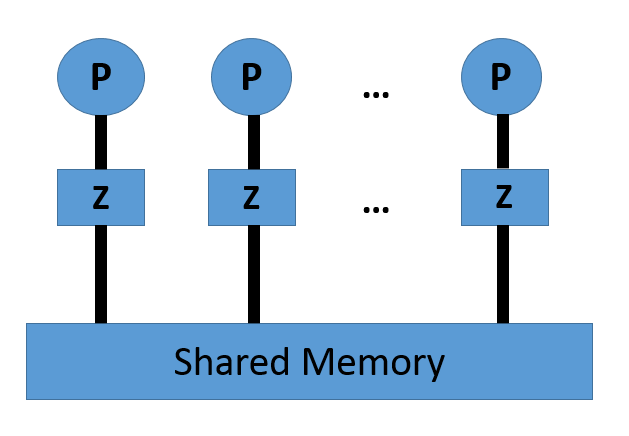}}
\caption{A Shared Memory Machine Model with Private Caches}
\label{fig:machine-model}
\end{figure} 
Although the PEM model is appropriate for analyzing the I/O complexity of parallel algorithms in terms of time performance \cite{Arge:2008}, we have found that the PEM model is not appropriate for analyzing parallel algorithms in terms of the dynamic energy of memory accesses. In fact, even when the $n$ cores can access data from the main memory simultaneously, the {\em dynamic} energy consumption of the access is proportional to the number $n$ of accessing cores (because of the load-store unit activated within each accessing core and the energy compositionality of parallel computations \cite{HaTUTGRWA:2014, chris-eehco14}), rather than a constant as implied by the PEM model.

As a result, we consider the ideal distributed cache (IDC) model \cite{Frigo:2006} to analyze I/O complexity of multithreaded algorithms in terms of dynamic energy consumption. 
Since the cache complexity of $m$ misses is $O(m)$ regardless of whether or not the cache misses are incurred simultaneously by the cores, the IDC model reflects the aforementioned dynamic energy consumption of memory accesses by the cores. 

However, the IDC model is mainly designed for analyzing the cache complexity of divide-and-conquer algorithms, making it difficult to apply to general multi-threaded algorithms targeted by our new ICE model. Constraining the new ICE model to the IDC model would limit the applicability of the ICE model to a wide range of multithreaded algorithms.

In order to make our new ICE complexity model applicable to a wide range of multithreaded algorithms, we show that the cache complexity analysis using the traditional (sequential) ideal cache (IC) model \cite{FrigoLPR:1999} can be used to find an upper bound  on the cache complexity of the same algorithm using the IDC model (cf. Lemma \ref{ePEM}). As the sequential execution of multithreaded algorithms is a valid execution regardless of whether they are divide-or-conquer algorithms, the ability to analyze the cache complexity of multithreaded algorithms via their sequential execution in the ICE complexity model improves the usability of the ICE model.   
 
Let $Q_1(Alg,B,Z)$ and $Q_P(Alg, B, Z)$ be the cache complexity of a parallel algorithm $Alg$ analyzed in the (uniprocessor) ideal cache (IC) model  \cite{FrigoLPR:1999} with block size $B$ and cache size $Z$ (i.e, running $Alg$ with a single core) and the cache complexity analyzed in the (multicore) IDC model with $P$ cores each of which has a private cache of size $Z$ and block size $B$, respectively. We have the following lemma:
\begin{lemma}
\label{ePEM}
The cache complexity $Q_P(Alg, B, Z)$ of a parallel algorithm $Alg$ analyzed in the ideal distributed cache (IDC) model with $P$ cores is bounded from above by the product of $P$ and the cache complexity $Q_1(Alg, B, Z)$ of the same algorithm analyzed in the ideal cache (IC) model. Namely,
\begin{equation}
Q_P(Alg, B, Z) \leq P*Q_1(Alg,B,Z)
\end{equation}
\end{lemma}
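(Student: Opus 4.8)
The plan is to decompose the parallel cache complexity $Q_P$ into a sum of per-core miss counts and then bound each term by the single-core complexity $Q_1$. In the IDC model the $P$ cores operate on disjoint private caches, and (as recalled in the excerpt) the total cache complexity simply adds up the misses incurred by the individual cores, irrespective of whether they occur simultaneously. Writing $Q^{(i)}$ for the number of misses charged to core $i$ when $Alg$ is run on $P$ cores, I would therefore start from the identity $Q_P(Alg,B,Z)=\sum_{i=1}^{P} Q^{(i)}$, so that the lemma reduces to proving the per-core bound $Q^{(i)} \le Q_1(Alg,B,Z)$ for every $i$.

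To connect a single core's behaviour to the uniprocessor execution, I would use the observation already made in the excerpt that a sequential execution is a valid execution of the multithreaded algorithm. Fixing one such serialization, the memory accesses performed by core $i$ appear, in their original per-core order, as a \emph{subsequence} $S_i$ of the full access sequence $S$ processed by the single core in the IC model. Because each core in the IDC model faults only on its own accesses against its own cache of size $Z$, the quantity $Q^{(i)}$ equals the number of misses incurred by replaying $S_i$ on an ideal cache of size $Z$, while $Q_1(Alg,B,Z)$ is the number of misses incurred by replaying the whole sequence $S$ on the same cache. The per-core bound is thus an instance of a monotonicity statement: deleting accesses from the request sequence cannot increase the miss count of an ideal (optimal-replacement) cache.

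I expect this monotonicity to be the main obstacle, since the ideal cache uses \emph{offline optimal} replacement and removing an access can change every subsequent eviction decision, so the two executions cannot be compared access-by-access. The approach I would take is an exchange/coupling argument, reducing by induction to the deletion of a single access $x$ that touches a block $\beta$. Starting from an optimal eviction schedule for $S$, I would induce a schedule for $S\setminus\{x\}$ that mirrors it but skips $x$, and maintain the invariant that the two caches differ in at most one block. Skipping a hit leaves the caches identical; skipping a miss lets the serial cache hold the loaded block $\beta$ while the subsequence cache retains the block $\gamma$ it would otherwise have evicted, a one-block discrepancy. One then checks that the only way this discrepancy can turn a later hit into a miss in the shorter run is an access to $\beta$, and that such an induced miss is always paid for by the miss already saved at $x$ (after which the caches are re-synchronised by evicting $\gamma$); symmetrically, an early access to $\gamma$ only saves further misses. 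Hence the shorter execution never overtakes the longer one in total misses, giving $Q^{(i)} = \mathrm{misses}(S_i) \le \mathrm{misses}(S) = Q_1$.

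Finally I would sum the per-core inequality over the $P$ cores, $Q_P=\sum_{i=1}^{P}Q^{(i)}\le\sum_{i=1}^{P}Q_1 = P\cdot Q_1$, which is exactly the claimed bound. It is worth noting that the bound is deliberately loose: it charges every core the full sequential miss count, which is tight only in the degenerate case where all cores touch essentially the whole working set, but it suffices for the asymptotic energy-complexity analysis that the ICE model requires.
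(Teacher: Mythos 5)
Your proof takes the same route as the paper's: write $Q_P(Alg,B,Z)=\sum_{i=1}^{P}Q_P^{i}(Alg,B,Z)$ as the sum of per-core miss counts, bound each term by $Q_1(Alg,B,Z)$, and sum over the $P$ cores. The only difference is one of detail: the paper's (sketch) proof asserts the per-core bound $Q_P^{i}\leq Q_1$ in a single sentence (``because caches do not interfere with each other''), whereas you correctly identify that bound as the crux and sketch the subsequence-monotonicity/exchange argument for the ideal cache needed to justify it --- a useful elaboration of exactly the step the paper leaves informal.
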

\begin{proof} 
(Sketch)
Let $Q_P^{i}(Alg, B, Z)$ be the number of cache misses incurred by core $i$ during the parallel execution of algorithm $Alg$ in the IDC model. Because caches do not interfere with each other in the IDC model, the number of cache misses incurred by core $i$ when executing algorithm $Alg$ in parallel by $P$ cores is not greater than the number of cache misses incurred by core $i$ when executing the whole algorithm $Alg$ only by core $i$. That is,
\begin{equation}
\label{eq:idc_ic_1}
Q_P^{i}{(Alg, B, Z)} \leq Q_1(Alg,B,Z)
\end{equation}
or
\begin{equation}
\label{eq:idc_ic_2}
\sum_{i=1}^P Q_P^{i}{(Alg, B, Z)} \leq P*Q_1(Alg,B,Z)
\end{equation}

On the other hand, since the number of cache misses incurred by algorithm $Alg$ when it is executed by $P$ cores in the IDC model is the sum of the numbers of cache misses incurred by each core during the $Alg$ execution, we have 
\begin{equation}
\label{eq:idc_ic_3}
Q_P(Alg, B, Z) = \sum_{i=1}^P Q_P^{i}{(Alg, B, Z)}
\end{equation}
From Equations \ref{eq:idc_ic_2} and \ref{eq:idc_ic_3}, we have
\begin{equation}
\label{eq:idc_ic_4}
Q_P(Alg, B, Z) \leq P*Q_1(Alg,B,Z)
\end{equation}
\end{proof}

We also make the following assumptions regarding platforms.
\begin{itemize}
\item Algorithms are executed with the best configuration (e.g., maximum number of cores, maximum frequency) following the race-to-halt strategy.
\item The I/O parallelism is bounded from above by the computation parallelism. Namely, each core can issue a memory request only if its previous memory requests have been served. Therefore, the work and span (i.e., critical path) of an algorithm represent the parallelism for both I/O and computation \cite{CormenLRS:2009}. 
\end{itemize}  

\section{Energy Complexity in ICE model}
\label{energy-model}
This section describes two energy complexity models, a platform-supporting energy complexity model considering both platform and algorithm characteristics and platform-independent energy complexity model considering only algorithm characteristics. The platform-supporting model is used when platform parameters in the model are available while platform-independent model analyses energy complexity of algorithms without considering platform characteristics.
\subsection{Platform-supporting Energy Complexity Model} 
This section describes a methodology to find energy complexity of algorithms. The energy complexity model considers three groups of parameters: machine-dependent, algorithm-dependent and input-dependent parameters. The reason to consider all three parameter-categories is that only operational intensity \cite{Williams2009} is insufficient to capture the characteristics of algorithms. Two algorithms with the same values of operational intensity might consume different levels of energy. The reasons are their differences in data accessing patterns leading to performance scalability gap among them. For example, although the sequential version and parallel version of an algorithm may have the same operational intensity, they may have different energy consumption since the parallel version would have less static energy consumption because of shorter execution time.

The energy consumption of a parallel algorithm is the sum of i) static energy (or leakage) $E_{static}$, ii) dynamic energy of computation $E_{comp}$ and iii) dynamic energy of memory accesses $E_{mem}$: $E=E_{static}+E_{comp}+E_{mem}$ \cite{Choi2013, Korthikanti2009, Korthikanti2010}. The static energy $E_{static}$ is the product of the execution time of the algorithm and the static power of the whole platform. The dynamic energy of computation and the dynamic energy of memory accesses are proportional to the number of computational operations $Work$ and the number of memory accesses $I/O$, respectively. Pipelining technique in modern architectures enables overlapping computation with memory accesses \cite{HaTUTGRWA:2014}. Since computation time and memory-access time can be overlapped, the execution time of the algorithm is assumed to be the maximum of computation time and memory-access time \cite{Choi2013}. Therefore, the energy consumption of algorithms is computed by Equation \ref{eq:BigET}, 
where the values of ICE parameters, including $\epsilon_{op}$, $\epsilon_{I/O}$, $\pi_{op}$, and $\pi_{I/O}$ are described in Table \ref{table:ModelParameters} and computed by the Equation \ref{eq:epsilon_op}, \ref{eq:epsilon_IO}, \ref{eq:pi_op}, and \ref{eq:pi_IO}, respectively.  
\begin{equation} \label{eq:BigET}
	E= P^{sta} \times max(T^{comp} ,T^{mem}) + \epsilon_{op} \times Work + \epsilon_{I/O} \times I/O
\end{equation}
\comm{
\begin{equation} \label{eq:BigE0}
	E= \epsilon_{op} \times Work + \epsilon_{I/O} \times I/O + max(\pi_{op} \times Span ,\pi_{I/O} \times \frac{I/O \times Span}{Work})  
\end{equation}
}

\begin{table}
\caption{ICE Model Parameter Description}
\label{table:ModelParameters}
\begin{center}
\begin{tabular}{ll}
\hline\noalign{\smallskip}
Machine & Description \\
\noalign{\smallskip}\hline\noalign{\smallskip}
$\epsilon_{op}$                  & dynamic energy of one operation (average) \\ 
$\epsilon_{I/O}$                  & dynamic energy of a random memory access (1 core)\\ 
$\pi_{op}$                  & static energy when performing one operation  \\ 
$\pi_{I/O}$                  & static energy of a random memory access \\ 
\noalign{\smallskip}\hline\noalign{\smallskip}
\hline\noalign{\smallskip}
Algorithm & Description \\
\noalign{\smallskip}\hline\noalign{\smallskip}
$Work$                & Number of work in flops of the algorithm \cite{CormenLRS:2009}        \\ 
$Span$                 & The critical path of the algorithm \cite{CormenLRS:2009}       \\ 
$I/O$                  & Number of cache line transfer of the algorithm \cite{CormenLRS:2009}       \\ 
\noalign{\smallskip}\hline\noalign{\smallskip}
\end{tabular}
\end{center}
\end{table} 
 
\begin{equation} \label{eq:epsilon_op}
	\epsilon_{op}= P^{op} \times \frac{F}{Freq}
\end{equation}
\begin{equation} \label{eq:epsilon_IO}	
	\epsilon_{I/O}= P^{I/O} \times \frac{M}{Freq}
\end{equation}
\begin{equation} \label{eq:pi_op}
	\pi_{op}= P^{sta} \times \frac{F}{Freq}
\end{equation}
\begin{equation} \label{eq:pi_IO}
	\pi_{I/O}= P^{sta} \times \frac{M}{Freq}
\end{equation}

The dynamic energy of one operation by one core $\epsilon_{op}$ is the product of the consumed power of one operation by one active core $P^{op}$ and the time to perform one operation. Equation \ref{eq:epsilon_op} shows how $\epsilon_{op}$ relates to frequency $Freq$ and the number of cycles per operation $F$. Similarly, the dynamic energy of a random access by one core $\epsilon_{I/O}$ is the product of the consumed power by one active core performing one I/O (i.e., cache-line transfer) $P^{I/O}$ and the time to perform one cache line transfer computed as $M/Freq$, where $M$ is the number of cycles per cache line transfer (cf. Equation \ref{eq:epsilon_IO}). The static energy of operations $\pi_{op}$ is the product of the whole platform static power $P^{sta}$ and time per operation. The static energy of one I/O $\pi_{I/O}$ is the product of the whole platform static power and time per I/O, shown by Equation \ref{eq:pi_op} and \ref{eq:pi_IO}.

In order to compute {\em work}, {\em span} and {\em I/O} complexity of the algorithms, the input parameters also need to be considered. For example, SpMV algorithms consider input parameters listed in Table \ref{table:AlgorithmInputParameters}. Cache size is captured in the ICE model by the {\em I/O complexity} of the algorithm. Note that in the ICE machine model (Section \ref{EPEM-model}), cache size $Z$ is a constant and may disappear in the {\em I/O complexity} (e.g., O-notation).   

The details of how to obtain the ICE parameters of recent platforms are discussed in Section \ref{experiment-set-up}. The actual values of ICE platform parameters for 11 recent platforms are presented in Table \ref{table:platform-parameter-values}. 
\comm{
\begin{table}
\caption{Platform Parameter Description}
\label{table:PlatformParameters}
\begin{center}
\begin{tabular}{ll}
\hline\noalign{\smallskip}
Machine & Description \\
\noalign{\smallskip}\hline\noalign{\smallskip}
$P^{sta}$                & Static power of a whole platform         \\ 
$P^{op}$                  & Dynamic power of an operation        \\ 
$P^{I/O}$                  & Power to transfer one cache line \\
$N$                & Maximum number of cores in the platform      \\ 
$M$              & Number of cycles per cache line transfer \\ 
$F$              & Number of cycles per operation      \\
$Freq$                & Platform frequency       \\
$Z$              & Cache size of a single processor        
 \\
$B$              & Cache block size         \\ 
\noalign{\smallskip}\hline\noalign{\smallskip}
\end{tabular}
\end{center}
\end{table}
}
\begin{table*}
\caption{Platform parameter summary. The parameters of the first nine platforms are derived from \cite{Choi2014} and the parameters of the two new platforms are found in this study.}
\label{table:platform-parameter-values}
\begin{center}
\begin{adjustbox}{width=1\textwidth}
\begin{tabular}{llllll}
\hline\noalign{\smallskip}
Platform & Processor &  $\epsilon_{op}$(nJ) &$\pi_{op}$(nJ) &$\epsilon_{I/O}$(nJ) & $\pi_{I/O}$(nJ) \\
\noalign{\smallskip}\hline\noalign{\smallskip}
Nehalem i7-950	&Intel i7-950	&0.670 	&2.455	&50.88	&408.80\\
Ivy Bridge i3-3217U	&Intel  i3-3217U	&0.024 	&0.591	&26.75	&58.99\\
Bobcat CPU 	&AMD  E2-1800	&0.199 	&3.980	&27.84	&387.47\\
Fermi GTX 580	&NVIDIA GF100	&0.213 	&0.622	&32.83	&45.66\\
Kepler GTX 680	&NVIDIA GK104	&0.263 	&0.452	&27.97	&26.90\\
Kepler GTX Titan	&NVIDIA GK110	&0.094 	&0.077	&17.09	&32.94\\
XeonPhi KNC	&Intel 5110P	&0.012 	&0.178	&8.70	&63.65\\
Cortex-A9	&TI OMAP 4460	&0.302 &1.152	&51.84	&174.00\\
Arndale Cortex-A15	&Samsung Exynos 5	&0.275 	&1.385	&24.70	&89.34\\
\noalign{\smallskip}\hline\noalign{\smallskip}
Xeon 	&2xIntel E5-2650l v3	&0.263	&0.108	&8.86	&23.29\\
Xeon-Phi	&Intel 31S1P	&0.006	&0.078	&25.02	&64.40\\
\noalign{\smallskip}\hline\noalign{\smallskip}
\end{tabular}
\end{adjustbox}
\end{center}
\end{table*} 

The computation time of parallel algorithms is proportional to the span complexity of the algorithm, which is $T^{comp}=\frac{Span \times F}{Freq}$ where $Freq$ is the processor frequency, and $F$ is the number of cycles per operation. The memory-access time of parallel algorithms in the ICE model is proportional to the I/O complexity of the algorithm divided by its I/O parallelism, which is $T^{mem} = \frac{I/O}{I/O-parallelism} \times \frac{M}{Freq}$. As I/O parallelism, which is the average number of I/O ports that the algorithm can utilize per step along the span, is bounded by the computation parallelism $\frac{Work}{Span}$, namely the average number of cores that the algorithm can utilize per step along the span (cf. Section \ref{EPEM-model}), the memory-access time $T^{mem}$ becomes: $T^{mem}=\frac{I/O \times Span \times M}{Work \times Freq}$ where $M$ is the number of cycles per cache line transfer. If an algorithm has $T^{comp}$ greater than $T^{mem}$, the algorithm is a CPU-bound algorithm. Otherwise, it is a memory-bound algorithm. 
\subsubsection{CPU-bound Algorithms}
If an algorithm has computation time $T^{comp}$ longer than data-accessing time $T^{mem}$ (i.e., CPU-bound algorithms), the ICE energy complexity model becomes Equation \ref{eq:BigE-cpu-time} which is simplified as Equation \ref{eq:BigE1}.
\begin{equation} \label{eq:BigE-cpu-time}
	E= P^{sta} \times \frac{Span \times F}{Freq} + \epsilon_{op} \times Work + \epsilon_{I/O} \times I/O \\
\end{equation}
or
\begin{equation} \label{eq:BigE1}
	E= \pi_{op} \times Span + \epsilon_{op} \times Work + \epsilon_{I/O} \times I/O \\
\end{equation}

\subsubsection{Memory-bound Algorithms}
If an algorithm has data-accessing time longer than computation time (i.e., memory-bound algorithms): $T^{mem} \geq T^{comp}$, energy complexity becomes Equation \ref{eq:BigE-mem-time} which is simplified as Equation \ref{eq:BigE2}. 
\begin{equation} \label{eq:BigE-mem-time}
	E= P^{sta} \times \frac{I/O \times Span \times M}{Work \times Freq} + \epsilon_{op} \times Work + \epsilon_{I/O} \times I/O \\
\end{equation}
or
\begin{equation} \label{eq:BigE2}
	E= \pi_{I/O} \times \frac{I/O \times Span}{Work} +  \epsilon_{op} \times Work + \epsilon_{I/O} \times I/O \\
\end{equation}
\subsection{Platform-independent Energy Complexity Model}
This section describes the energy complexity model that is platform-independent and considers only algorithm characteristics. This complexity model is used when analyzing energy complexity of an algorithm without platform parameters. When the platform parameters (i.e., $\epsilon_{op}$, $\epsilon_{I/O}$, $\pi_{op}$, and $\pi_{I/O}$) are unavailable, the energy complexity model is derived from Equation \ref{eq:BigET} because the platform parameters are constants and can be removed. Assuming $\pi_{max} = max(\pi_{op}, \pi_{I/O})$, after removing platform parameters, the platform-independent energy complexity model are shown in Equation \ref{eq:BigE-non-flatform}.
\begin{equation} \label{eq:BigE-non-flatform}
	E= O(Work+I/O+max(Span, \frac{I/O \times Span}{Work}))  
\end{equation} 
\section{A Case Study of Sparse Matrix Multiplication}
\label{SpMV-energy}
SpMV is one of the most common application kernels in Berkeley dwarf list \cite{Asa06}. It computes a vector result $y$ by multiplying a sparse matrix $A$ with a dense vector $x$: $y=Ax$. SpMV is a data-intensive kernel and has irregular memory-access patterns. The data access patterns for SpMV is defined by its sparse matrix format and matrix input types. 
There are several sparse matrix formats and SpMV algorithms in literature. To name a few, they are Coordinate Format (COO), Compressed Sparse Column (CSC), Compressed Sparse Row (CSR), Compressed Sparse Block (CSB), Recursive Sparse Block (RSB), Block Compressed Sparse Row (BCSR) and so on.
Three popular SpMV algorithms, namely CSC, CSB and CSR are chosen to validate the proposed energy complexity model. They have different data-accessing patterns leading to different values of I/O, work and span complexity. Since SpMV is a memory-bound application kernel, Equation \ref{eq:BigE2} is applied. The input matrices of SpMV have different parameters listed in Table \ref{table:AlgorithmInputParameters}.
\begin{table}
\caption{SpMV Input Parameter Description}
\label{table:AlgorithmInputParameters}
\begin{center}
\begin{tabular}{ll}
\hline\noalign{\smallskip}
SpMV Input & Description \\
\noalign{\smallskip}\hline\noalign{\smallskip}
$n$                & Number of rows        \\ 
$nz$                 & Number of nonzero elements        \\ 
$nr$                  & Maximum number of nonzero in a row        \\ 
$nc$                  & Maximum number of nonzero in a column        \\ 
$\beta$                  & Size of a block \\ 
\noalign{\smallskip}\hline\noalign{\smallskip}
\end{tabular}
\end{center}
\end{table}
\subsection{Compressed Sparse Row}
CSR is a standard storage format for sparse matrices which reduces the storage of matrix compared to the tuple representation \cite{Kotlyar:1997}. This format enables row-wise compression of $A$ with size $n \times n$ (or  $n \times m$) to store only the non-zero $nz$ elements. Let {\em nz} be the number of non-zero elements in matrix A. 
The {\em work} complexity of CSR SpMV is $\Theta(nz)$ where $nz>=n$ and {\em span} complexity is $O(nr + \log{n})$ \cite{Buluc:2009}, where $nr$ is the maximum number of non-zero elements in a row. The {\em I/O} complexity of CSR in the sequential I/O model of row-major layout is $O(nz)$ \cite{Bender2010} namely, scanning all non-zero elements of matrix $A$ costs $O(\frac{nz}{B})$ I/Os with B is the cache block size. However, randomly accessing vector $x$ causes the total of $O(nz)$ I/Os.
Applying the proposed model on CSR SpMV, their total energy complexity are computed as Equation \ref{eq:BigE-CSR}.
\begin{equation} \label{eq:BigE-CSR}
	E_{CSR}= O(\epsilon_{op} \times nz + \epsilon_{I/O} \times nz + \pi_{I/O} \times (nr+ \log{}n)) 
\end{equation}
\subsection{Compressed Sparse Column}
CSC is the similar storage format for sparse matrices as CSR. However, it compresses the sparse matrix in column-wise manner to store the non-zero elements. The {\em work} complexity of CSC SpMV is $\Theta(nz)$ where $nz>=n$ and {\em span} complexity is $O(nc + \log{n})$, where $nc$ is the maximum number of non-zero elements in a column. The {\em I/O} complexity of CSC in the sequential I/O model of column-major layout is $O(nz)$ \cite{Bender2010}. Similar to CSR, scanning all non-zero elements of matrix $A$ in CSC format costs $O(\frac{nz}{B})$ I/Os. However, randomly updating vector $y$ causing the bottle neck with total of $O(nz)$ I/Os.
Applying the proposed model on CSC SpMV, their total energy complexity are computed as Equation \ref{eq:BigE-CSC}.
\begin{equation} \label{eq:BigE-CSC}
	E_{CSC}= O(\epsilon_{op} \times nz + \epsilon_{I/O} \times nz + \pi_{I/O} \times (nc+ \log{}n)) 
\end{equation}
\subsection{Compressed Sparse Block}
Given a sparse matrix $A$, while CSR has good performance on SpMV $y=Ax$, CSC has good performance on transpose sparse matrix vector multiplication $y=A^{T}\times x$, Compressed sparse blocks (CSB) format is efficient for computing either $Ax$ or $A^{T}x$. CSB is another storage format for representing sparse matrices by dividing the matrix $A$ and vector $x, y$  to blocks. A block-row contains multiple chunks, each chunks contains consecutive blocks and non-zero elements of each block are stored in Z-Morton-ordered \cite{Buluc:2009}.
From Beluc et al. \cite{Buluc:2009}, CSB SpMV computing a matrix with $nz$ non-zero elements, size $n\times n$ and divided by block size $\beta \times \beta$ has span complexity $O(\beta \times \log{\frac{n}{\beta}}+ \frac{n}{\beta})$ and {\em work} complexity as $\Theta(\frac{n^2}{\beta^2}+nz)$.

{\em I/O} complexity for CSB SpMV is not available in the literature. We do the analysis of CSB manually by following the master method \cite{CormenLRS:2009}. The {\em I/O} complexity is analyzed for the algorithm CSB\_SpMV(A,x,y) from Beluc et al. \cite{Buluc:2009}. The I/O complexity of CSB is similar to {\em work} complexity of CSB $O(\frac{n^2}{\beta^2} + nz)$, only that non-zero accesses in a block is divided by B: $O(\frac{n^2}{\beta^2} + {\frac{nz}{B}})$, where $B$ is cache block size. The reason is that non-zero elements in a block are stored in Z-Morton order which only requires $\frac{nz}{B}$ I/Os. The energy complexity of CSB SPMV is shown in Equation \ref{eq:BigE-CSB}.

From the complexity analysis of SpMV algorithms using different layouts, the complexity of CSR-SpMV, CSC-SpMV and CSB-SpMV are summarized in Table \ref{table:SpMV-complexity}.
\begin{floatEq}
\begin{equation}
\label{eq:BigE-CSB}
	E_{CSB}= O(\epsilon_{op} \times (\frac{n^2}{\beta^2} + nz) + \epsilon_{I/O} \times (\frac{n^2}{\beta^2} + \frac{nz}{B}) + \pi_{I/O} \times \frac{(\frac{n^2}{\beta^2} + \frac{nz}{B})\times (\beta \times \log{\frac{n}{\beta}}+ \frac{n}{\beta})}{(\frac{n^2}{\beta^2} + nz)} )
\end{equation}
\end{floatEq}
\begin{table*}
\caption{SpMV Complexity Analysis}
\label{table:SpMV-complexity}
\begin{center}
\begin{tabular}{llll}
\hline\noalign{\smallskip}
Complexity & CSC-SpMV & CSB-SpMV & CSR-SpMV \\
\noalign{\smallskip}\hline\noalign{\smallskip}
Work  & $\Theta(nz)$ \cite{Buluc:2009} & $\Theta(\frac{n^2}{\beta^2} + nz)$ \cite{Buluc:2009} & $\Theta(nz)$ \cite{Buluc:2009}\\ 
I/O                 & $O(nz)$ \cite{Bender2010} & $O(\frac{n^2}{\beta^2} + {\frac{nz}{B}})$ [this study] & $O(nz)$ \cite{Bender2010} \\ 
Span                  & $O(nc+ \log{}n)$ \cite{Buluc:2009} & $O(\beta \times \log{\frac{n}{\beta}}+ \frac{n}{\beta})$ \cite{Buluc:2009} & $O(nr+ \log{}n)$  \cite{Buluc:2009}\\ 

\noalign{\smallskip}\hline\noalign{\smallskip}
\end{tabular}
\end{center}
\end{table*}

\section{A Case Study of Dense Matrix Multiplication}
\label{Matmul-energy}
Besides SpMV, we also apply the ICE model to dense matrix multiplication (matmul). Unlike SpMV, a data-intensive kernel, matmul is a computation-intensive kernel used in high performance computing. It computes output matrix C (size n x p) by multiplying two dense matrices A (size n x m) and B (size m x p): $C=A \times B$. In this work, we implemented two matmul algorithms (i.e., a basic algorithm and a cache-oblivious algorithm \cite{FrigoLPR:1999}) and apply the ICE analysis to find their energy complexity. Both algorithms partition matrix A and C equally to N sub-matrices (e.g., $A_{i}$ with i=(1,2,..,N)), where N is the number of cores in the platform. The partition approach is shown in Figure \ref{fig:matmul-algo}. Each core computes a sub-matrix $C_{i}$: $C_{i}=A_{i} \times B$. Since matmul is a computation-bound application kernel, Equation \ref{eq:BigE1} is applied.
\begin{figure}[!t] \centering
\resizebox{0.6\columnwidth}{!}{ \includegraphics{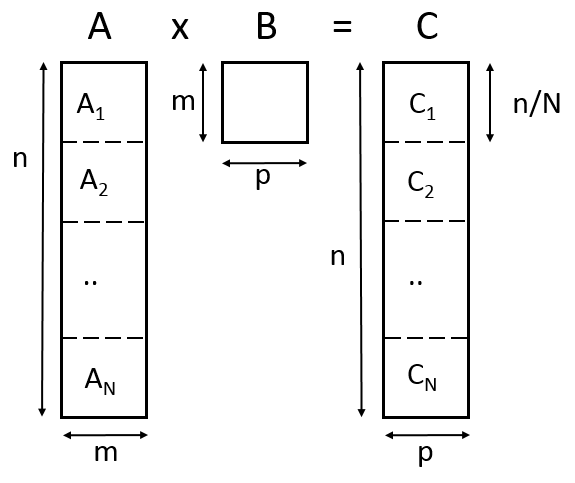}}
\caption{Partition approach for parallel matmul algorithms. Each sub-matrix $A_{i}$ has size $\frac{n}{N} \times m$ and each sub-matrix $C_{i}$ has size $\frac{n}{N} \times p$.}
\label{fig:matmul-algo}
\end{figure} 
\subsection{Basic Matmul Algorithm}
The basic matmul algorithm is described in Figure \ref{fig:matmul-simple}. Its work complexity is $\Theta(2nmp)$ \cite{Yelick2004} and span complexity is $\Theta(\frac{2nmp}{N})$ because the computational work is divided equally to N cores due to matrix partition approach. When matrix size of matrix B is bigger than platform cache size, the basic algorithm loads matrix B n times (i.e., once for computing each row of C), results in $\frac{nmp}{B}$ cache block transfer, where $B$ is cache block size. In total, I/O complexity of the basic matmul algorithm is $\Theta(\frac{nm+nmp+np}{B})$. Applying the ICE model on this algorithm, the total energy complexity is computed as Equation \ref{eq:BigE-Naive}.
\begin{figure}[!t] \centering
\resizebox{0.8\columnwidth}{!}{ \includegraphics{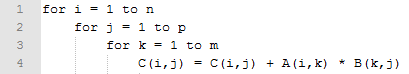}}
\caption{Basic matmul algorithm, where sizes of matrix A, B, C are nxm, mxp, nxp, respectively.}
\label{fig:matmul-simple}
\end{figure} 
\begin{equation} \label{eq:BigE-Naive}
	E_{basic}= O(\epsilon_{op} \times 2nmp + \epsilon_{I/O} \times \frac{nm+nmp+np}{B} + \pi_{op} \times \frac{2nmp}{N}) 
\end{equation}
\subsection{Cache-oblivious Matmul Algorithm}
The cache-oblivious matmul (CO-matmul) algorithm \cite{FrigoLPR:1999} is a divide-and-conquer algorithm. It has work complexity the same as the basic matmul algorithm $\Theta(2nmp)$. Its span complexity is also $\Theta(\frac{2nmp}{N})$ because of the used matrix partition approach shown in Figure \ref{fig:matmul-algo}. The I/O complexity of CO-matmul, however, is different from the basic algorithm: $\Theta(n+m+p+\frac{nm+mp+np}{B} +\frac{nmp}{B\sqrt[2]{Z}})$ \cite{FrigoLPR:1999}. Applying the ICE model to CO-matmul, the total energy complexity is computed as Equation \ref{eq:BigE-CO}.
\begin{floatEq}
\begin{equation} \label{eq:BigE-CO}
	E_{CO}= O(\epsilon_{op} \times 2nmp + \epsilon_{I/O} \times (n+m+p+\frac{nm+mp+np}{B}+\frac{nmp}{B\sqrt[2]{Z}})+\pi_{op} \times \frac{2nmp}{N})
\end{equation}
\end{floatEq}
\begin{table*}
\caption{Matmul Complexity Analysis}
\label{table:Matmul-complexity}
\begin{center}
\begin{adjustbox}{width=1\textwidth}
\begin{tabular}{llll}
\hline\noalign{\smallskip}
Complexity & Cache-oblivious Algorithm & Basic Algorithm\\
\noalign{\smallskip}\hline\noalign{\smallskip}
Work  & $\Theta(2nmp)$ \cite{FrigoLPR:1999}  & $\Theta(2nmp)$  \cite{Yelick2004}  \\ 
I/O   & $\Theta(n+m+p+\frac{nm+mp+np}{B} +\frac{nmp}{B\sqrt[2]{Z}})$ \cite{FrigoLPR:1999}  & $\Theta(\frac{nm+nmp+np}{B})$ [this study]\\ 
Span  & $\Theta(\frac{2nmp}{N})$ [this study] & $\Theta(\frac{2nmp}{N})$ [this study]\\ 
\noalign{\smallskip}\hline\noalign{\smallskip}
\end{tabular}
\end{adjustbox}
\end{center}
\end{table*}

\section{Validation of ICE Model}
\label{validation}
This section describes the experimental study to validate the ICE model, including: introducing the two experimental platforms and how to obtain their parameters for the ICE model (cf. Section \ref{experiment-set-up}), describing SpMV implementation and sparse matrix types used in this validation (cf. Section \ref{SpMV-Implementation}), and discussing the validation results of SpMV and matmul.
\subsection{Experiment Set-up}
\label{experiment-set-up}
For the validation of the ICE model, we conduct the experiments on two HPC platforms: one platform with two Intel Xeon E5-2650l v3 processors and one platform with Xeon Phi 31S1P processor. The Intel Xeon platform has two processors Xeon E5-2650l v3 with $2\times12$ cores, each processor has the frequency 1.8 GHz. The Intel Xeon Phi platform has one processor Xeon Phi 31S1P with $57$ cores and its frequency is 1.1 GHz. To measure energy consumption of the platforms, we read the PCM MSR counters for Intel Xeon and MIC power reader for Xeon Phi.
\subsection{Identifying Platform Parameters}
We apply the energy roofline approach \cite{Choi2013, Choi2014} to find the platform parameters for the two new experimental platforms, namely Intel Xeon E5-2650l v3 and Xeon Phi 31S1P. Moreover, the energy roofline study \cite{Choi2014} has also provided a list of other platforms including CPU, GPU, embedded platforms with their parameters considered in the Roofline model. Thanks to authors Choi et al. \cite{Choi2014}, we extract the required values of ICE parameters for nine platforms presented in their study as follows: $\epsilon_{op}=\epsilon_{d}$, $\epsilon_{I/O}=\epsilon_{mem}\times B$, $\pi_{op}=\pi_{1} \times \tau_{d}$, $\pi_{I/O}=\pi_{1} \times \tau_{mem}$, where $B$ is cache block size, $\epsilon_{d}$, $\epsilon_{d}$, $\tau_{d}$, $\tau_{mem}$ are defined by \cite{Choi2014} as energy per flop, energy per byte, time per flop and time per byte, respectively. 

The ICE parameter values of the two new HPC platforms (i.e., Xeon and Xeon-Phi 31S1P) used to validate the ICE model are obtained by using the same approach as energy roofline study \cite{Choi2013}. We create micro-benchmarks for the two platforms and measure their energy consumption and performance. The ICE parameter values of each platform are obtained from energy and performance data by regression techniques. Along with the two HPC platforms used in this validation, we provide parameters required in the ICE model for a total of 11 platforms. Their platform parameters are listed in Table \ref{table:platform-parameter-values} for further uses.
\subsection{SpMV Implementation}
\label{SpMV-Implementation}
We want to conduct complexity analysis and experimental study with two SpMV algorithms, namely CSB and CSC. Parallel CSB and sequential CSC implementations are available thanks to the study from Bulu\c{c} et al. \cite{Buluc:2009}. Since the optimization steps of available parallel SpMV kernels (e.g., pOSKI \cite{pOSKI}, LAMA\cite{Forster2011}) might affect the work complexity of the algorithms, we decided to implement a simple parallel CSC using Cilk and pthread. To validate the correctness of our parallel CSC implementation, we compare the vector result $y$ from $y=A*x$ of CSC and CSB implementation. The comparison shows the equality of the two vector results $y$. Moreover, we compare the performance of the our parallel CSC code with Matlab parallel CSC-SpMV kernel. Matlab also uses CSC layout as the format for their sparse matrix \cite{Gilbert:1992} and is used as baseline comparison for SpMV studies \cite{Buluc:2009}. Our CSC implementation has out-performed Matlab parallel CSC kernel when computing the same targeted input matrices at least 136\% across different inputs from Table \ref{table:matrix-type}. 
The experimental study of SpMV energy consumption is then conducted with CSB SpMV implementation from Bulu\c{c} et al. \cite{Buluc:2009} and our CSC SpMV parallel implementation.  
\begin{figure}[!t] \centering
\resizebox{0.8\columnwidth}{!}{ \includegraphics{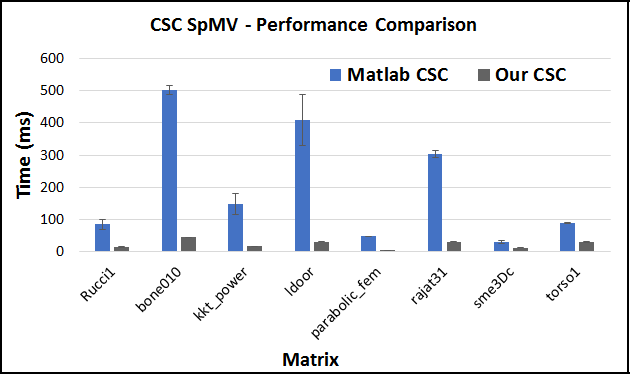}}
\caption{Performance (time) comparison of two parallel CSC SpMV implementations. For a set of different input matrices, the parallel CSC SpMV using Cilk out-performs Matlab parallel CSC.}
\label{fig:Matlab-comparison}
\end{figure}
\subsection{SpMV Matrix Input Types}
We conducted the experiments with nine different matrix-input types from Florida sparse matrix collection \cite{Davis:2011}. Each matrix input has different properties listed in Table \ref{table:AlgorithmInputParameters}, including size of the matrix $n\times m$, the maximum number of non-zero of the sparse matrix $nz$, the maximum number of non-zero elements in one column $nc$. Table \ref{table:matrix-type} lists the matrix types used in this experimental validation with their properties. 
\begin{table}
\caption{Sparse matrix input types. The maximum number of non-zero elements in a column $nc$ is derived from \cite{Buluc:2009}.}
\label{table:matrix-type}
\begin{center}
\begin{tabular}{lllll}
\hline\noalign{\smallskip}
\textbf{Matrix type}	&\textbf{n}	&\textbf{m}	&\textbf{nz}	&\textbf{nc}	\\
\noalign{\smallskip}\hline\noalign{\smallskip}
bone010	&986703	&986703	&47851783	&63	\\
kkt\_power	&2063494	&2063494	&12771361	&90	\\
ldoor	&952203	&952203	&42493817	&77	\\
parabolic\_fem	&525825	&525825	&3674625	&7	\\
pds-100	&156243	&517577	&1096002	&7	\\
rajat31	&4690002	&4690002	&20316253	&1200	\\
Rucci1	&1977885	&109900	&7791168	&108	\\
sme3Dc	&42930	&42930	&3148656	&405	\\
torso1	&116158	&116158	&8516500	&1200	\\
\noalign{\smallskip}\hline\noalign{\smallskip}
\end{tabular}
\end{center}
\end{table}
\subsection{Validating ICE Using Different SpMV Algorithms}
The model aims to compare energy consumption of two algorithm. Therefore, we validate the ICE model by showing the comparison using the ratio of energy consumption of two algorithms. From the model-estimated data, CSB SpMV consumes less energy than CSC SpMV on both platforms. Even though CSB has higher work complexity than CSC, CSB SpMV has less I/O complexity than CSC SpMV. Firstly, the dynamic energy cost of one I/O is much greater than the energy cost of one operation (i.e., $\epsilon_{I/O}>>\epsilon_{op}$) on both platforms. Secondly, CSB has better parallelism than CSC, computed by $\frac{Work}{Span}$, which results in shorter execution time. Both reasons contribute to the less energy consumption of CSB SpMV. 
\begin{figure}[!t] \centering
\resizebox{0.9\columnwidth}{!}{ \includegraphics{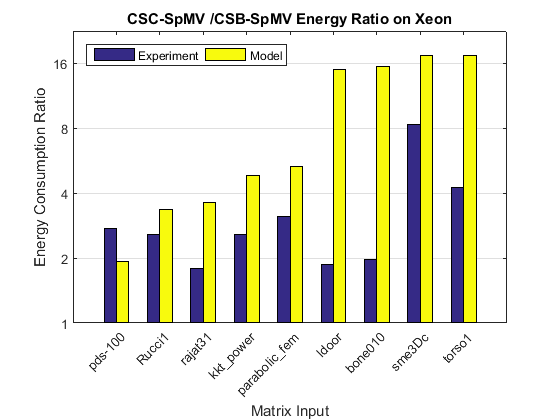}}
\caption{Energy consumption comparison between CSC-SpMV and CSB-SpMV on the Intel Xeon platform, computed by $\frac{E_{CSC}}{E_{CSB}}$. Both the ICE model estimation and experimental measurement on Intel Xeon platform show the consistent results that $\frac{E_{CSC}}{E_{CSB}}$ is greater than 1, meaning CSC SpMV algorithm consumes more energy than the CSB SpMV algorithm on different input matrices.}
\label{fig:CSBvsCSC-Xeon}
\end{figure} 
\begin{figure}[!t] \centering
\resizebox{0.9\columnwidth}{!}{ \includegraphics{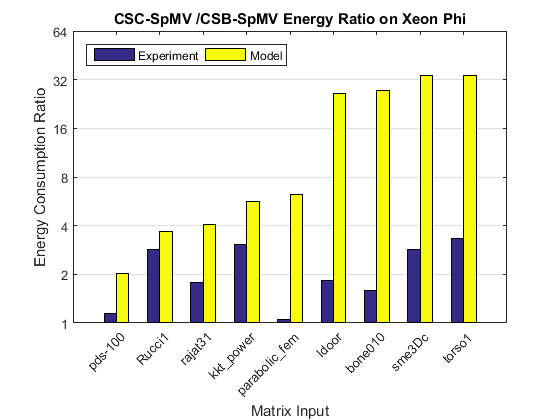}}
\caption{Energy consumption comparison between CSC-SpMV and CSB-SpMV on the Intel Xeon Phi platform, computed by $\frac{E_{CSC}}{E_{CSB}}$. Both the ICE model estimation and experimental measurement on Intel Xeon Phi platform show the consistent results that $\frac{E_{CSC}}{E_{CSB}}$ is greater than 1, meaning CSC SpMV algorithm consumes more energy than the CSB SpMV algorithm on different input matrices.}
\label{fig:CSBvsCSC-XeonPhi}
\end{figure} 
The measurement data confirms that CSB SpMV algorithm consumes less energy than CSC SpMV algorithm, shown by the energy consumption ratio between CSC-SpMV and CSB-SpMV greater than 1 in the Figure \ref{fig:CSBvsCSC-Xeon} and \ref{fig:CSBvsCSC-XeonPhi}. For all input matrices, the ICE model has confirmed that CSB SpMV consumes less energy than CSC SpMV algorithm. Because the model has abstracted possible platform by only 4 parameters (i.e., $\epsilon_{op}$, $\epsilon_{I/O}$, $\pi_{op}$, and $\pi_{I/O}$), there are the differences between the model and experiment ratios shown in the Figure \ref{fig:CSBvsCSC-Xeon} and \ref{fig:CSBvsCSC-XeonPhi}. For accurate models that provide the precise energy estimation, the platform parameters need to be highly detailed such as RTHpower model for embedded platforms \cite{Tran2016, TranSamos2016}.
\comm{
\subsection{Validating ICE Using Different Input Types}
\begin{table*}
\caption{Comparison of Energy Consumption of Different Matrix Input Types.}
\label{table:Input-Energy-Comparison}
\centering
\begin{tabular}{lllllllll}
\hline\noalign{\smallskip}
Algorithm &CSB &CSB &CSC  &CSC  &CSB  &CSB &CSC  &CSC  \\
\noalign{\smallskip}\hline\noalign{\smallskip}
Platform &Xeon &Xeon &Xeon &Xeon  &Xeon-Phi &Xeon-Phi &Xeon-Phi &Xeon-Phi  \\
\noalign{\smallskip}\hline\noalign{\smallskip}
Model/Exprmt & model &  exprmt &model &exprmt  &model &exprmt &model &exprmt  \\
\noalign{\smallskip}\hline\noalign{\smallskip}
Increasing &sme3Dc	&pds-100	&pds-100	&pds-100	&sme3Dc	&pds-100	&pds-100	&parabolic\\

Energy&torso1	&parabolic	&sme3Dc	&parabolic	&torso1	&parabolic	&sme3Dc	&pds-100\\

Consumption&pds-100	&sme3Dc	&parabolic	&sme3Dc	&pds-100	&Rucci1	&parabolic	&Rucci1\\

Order&parabolic	&Rucci1	&Rucci1	&Rucci1	&parabolic	&sme3Dc	&Rucci1	&sme3Dc\\

&Rucci1	&kkt &torso1	&kkt	&ldoor	&kktr	&torso1	&rajat31\\

&kkt	&torso1	&kkt	&torso1	&bone010	&torso1	&kkt	&kkt\\

&ldoor	&rajat31	&rajat31	&rajat31	&Rucci1	&rajat31	&rajat31	&ldoor\\

&bone010	&ldoor	&ldoor	&ldoor	&kkt	&ldoor	&ldoor	&torso1\\

&rajat31	&bone010	&bone010	&bone010	&rajat31	&bone010	&bone010	&bone010\\
\noalign{\smallskip}\hline\noalign{\smallskip}
\end{tabular}
\end{table*}
\begin{table*}
\caption{CSC Energy Comparison of Different Input Matrix Types on Xeon}
\label{table:Input-Comparison-CSC}
\begin{center}
\begin{tabular}{llllllllll}
\hline\noalign{\smallskip}
Correctness	&pds-100	&parabolic	&sme3Dc	&Rucci1	&kkt	&torso1	&rajat31	&ldoor	&bone010\\
pds-100	&x	&1	&1	&1	&1	&1	&1	&1	&1\\
parabolic	&	&x	&0	&1	&1	&1	&1	&1	&1\\
sme3Dc		&&	&x	&1	&1	&1	&1	&1	&1\\
Rucci1		&&&		&x	&1	&1	&1	&1	&1\\
kkt			&&&&		&x	&0	&1	&1	&1\\
torso1		&&&&&				&x	&1	&1	&1\\
rajat31		&&&&&&					&x	&1	&1\\
ldoor		&&&&&&&						&x	&1\\
bone010		&&&&&&&&							&x\\
\noalign{\smallskip}\hline\noalign{\smallskip}
\end{tabular}
\end{center}
\end{table*}
\begin{table}
\caption{Comparison accuracy of SpMV energy consumption computing different input matrix types}
\label{table:Input-Comparison-Accuracy}
\begin{center}
\begin{tabular}{llll}
\hline\noalign{\smallskip}
Algorithm &CSB  &CSC\\
\noalign{\smallskip}\hline\noalign{\smallskip}
Xeon &75\%  &94\%\\
Xeon Phi &63.8\%  &80.5\%\\
\noalign{\smallskip}\hline\noalign{\smallskip}
\end{tabular}
\end{center}
\end{table}
To validate the ICE model regarding input types, the experiments have been conducted with nine matrix types listed in Table \ref{table:matrix-type}. The model can capture the energy-consumption relation among different inputs. The increasing order of energy consumption of different matrix-input types are shown in Table \ref{table:Input-Energy-Comparison}, from both model estimation and experimental study.

For instance, in order to validate the comparison of energy consumption for different input types, a validated table as Table \ref{table:Input-Comparison-CSC} is created for CSC SpMV on Xeon to compare model prediction and experimental measurement. For nine input types, there are $\frac{9\times9}{2}-9=36$ input relations. If the relation is correct, meaning both experimental data and model data are the same, the relation value in the table of two inputs is 1. Otherwise, the relation value is 0. From Table \ref{table:Input-Comparison-CSC}, there are 34 out of 36 relations are the same for both model and experiment, which gives 94\% accuracy on the relation of the energy consumption of different inputs. Similarly, the input validation for CSC and CSB on both Xeon and Xeon Phi platforms is provided in Table \ref{table:Input-Comparison-Accuracy}.
}
\comm{
\subsection{Validating The Applicability of ICE on Different Platforms}
\begin{figure*}[!t] \centering
\resizebox{0.9\textwidth}{!}{\includegraphics{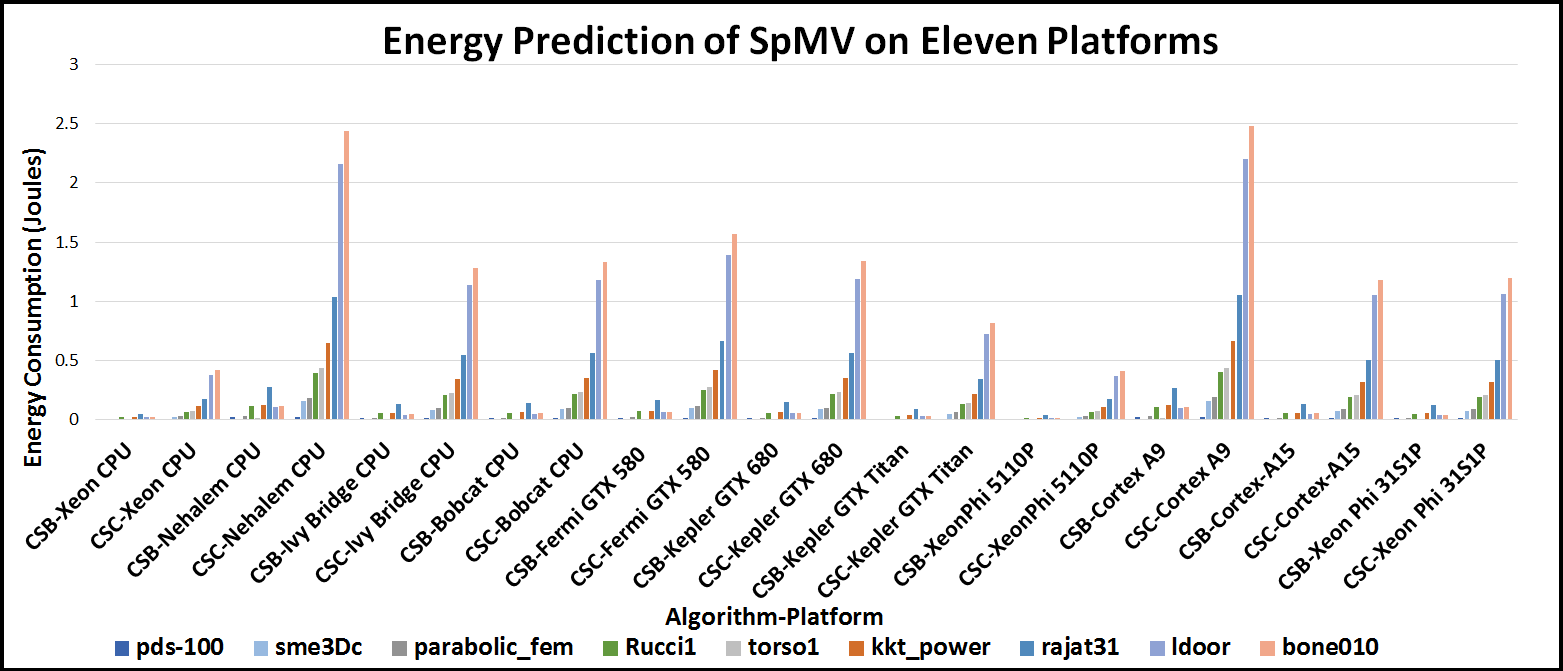}}
\caption{Energy Comparison of CSB and CSC SpMV on eleven different platforms.}
\label{fig:ModelPrediction_Platforms}
\end{figure*}
The energy comparison of CSB and CSC SpMV is concluded for eleven platforms listed in Table \ref{table:platform-parameter-values}. Like two Xeon and Xeon Phi 31S1P platforms used in experiments, Figure \ref{fig:ModelPrediction_Platforms} shows the prediction that CSB SpMV consumes less energy than CSC SpMV, on all platforms listed in Table \ref{table:platform-parameter-values}. This confirms the applicability of ICE model to compare the energy consumption of algorithms on different platforms with different input types.
}
\comm{
\subsection{Validating the Platform-independent Energy Complexity Model}
From Equation \ref{eq:BigE-CSC} and \ref{eq:BigE-CSB}, the platform-independent energy complexity for CSC and CSB SpMV are derived as Equation \ref{eq:BigE-CSC-plat-ind} and \ref{eq:BigE-CSB-plat-ind}, respectively.
\begin{equation} \label{eq:BigE-CSC-plat-ind}
	E_{CSC}= O(2 \times nz + (nc+ \log{}n)) 
\end{equation}
\begin{equation}
\label{eq:BigE-CSB-plat-ind}
	E_{CSB}= O(2 \times \frac{n^2}{\beta^2} + nz \times (1+\frac{1}{B}) + \beta \times \log{\frac{n}{\beta}}+ \frac{n}{\beta}) 
\end{equation}
We validate the platform-independent energy complexity of CSC and CSB SpMV. The platform-independent energy complexity also shows the accurate comparison of CSC and CSB SpMV computing different matrix types shown in Figure \ref{fig:model-comparison}. Both platform-independent and platform-supporting models show that CSC-SpMV algorithm consumes more energy than CSB-algorithm. 
However, the difference gap between the energy complexity of CSC and CSB using the platform-independent model is not clear for all input types except "ldoor" and "bone010" while in the platform-supporting model, the difference gap is clearer and consistent with the experiment results in terms of which algorithm consumes less energy for different input types. 
Comparing energy consumption of different input types requires more detailed information of the platforms. Therefore, the platform-independent model is only applicable to predict which algorithm consumes more energy.
\begin{figure}[!t] \centering
\resizebox{0.7\columnwidth}{!}{\includegraphics{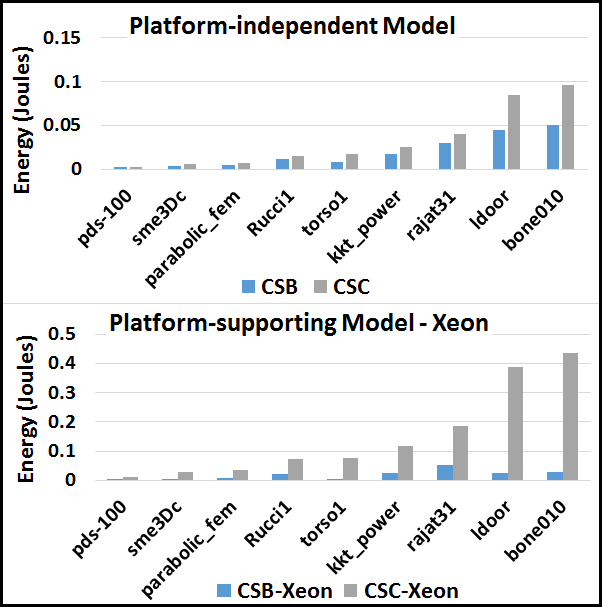}}
\caption{Comparison of platform-dependent and platform-supporting energy complexity model. Both models show that CSC SpMV consumes more energy than CSB SpMV.}
\label{fig:model-comparison}
\end{figure}
}
\label{SpMV-validation}
\subsection{Validating ICE With Matmul Algorithms}
From the model-estimated data, Basic-Matmul consumes more energy than CO-Matmul on both platforms. Even though both algorithms have the same work and span complexity, Basic-Matmul has more I/O complexity than CO-Matmul, which results in greater energy consumption of Basic-Matmul compared to CO-Matmul algorithm.
\begin{figure}[!t] \centering
\resizebox{0.8\columnwidth}{!}{ \includegraphics{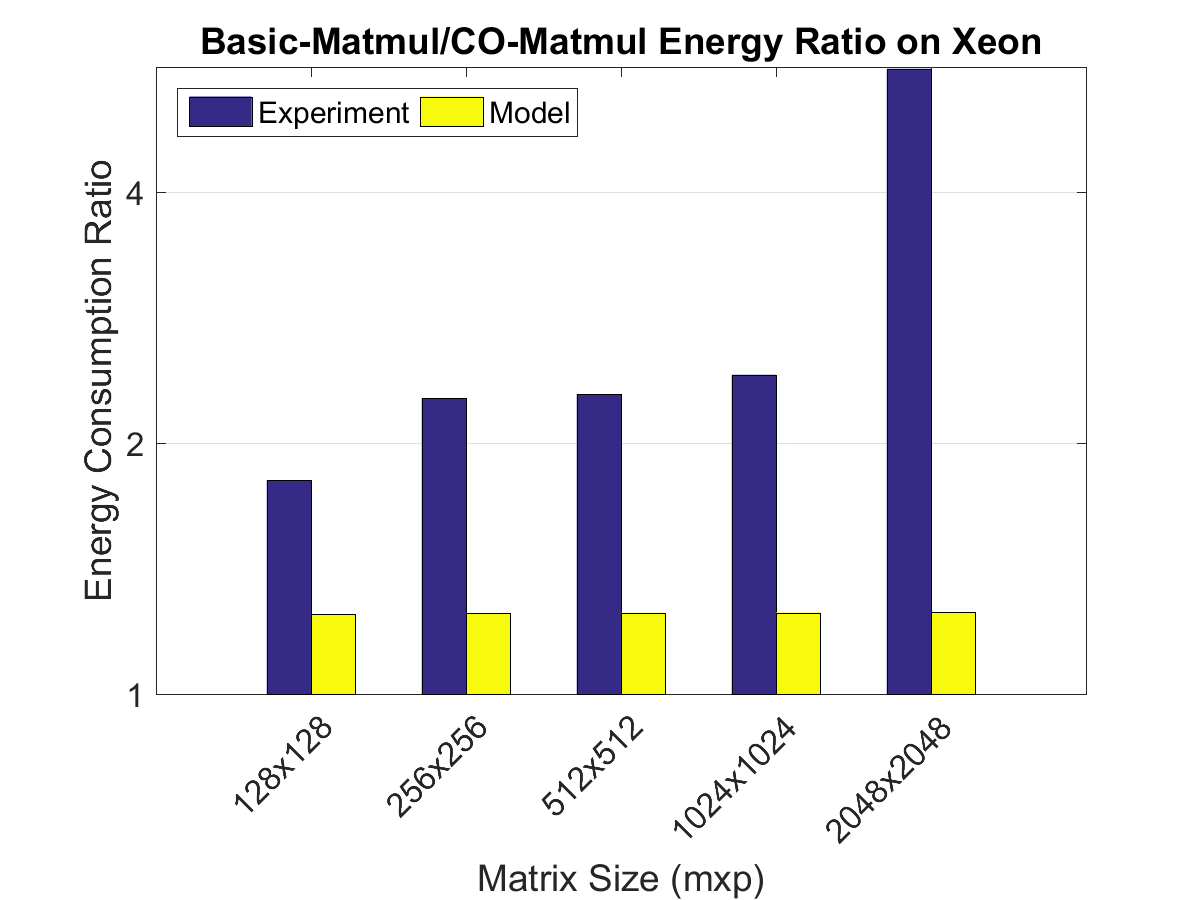}}
\caption{Energy consumption comparison between Basic-Matmul and CO-Matmul on the Intel Xeon platform, computed by $\frac{E_{Basic}}{E_{CO}}$. Both the ICE model estimation and experimental measurement on Intel Xeon platform show that $\frac{E_{Basic}}{E_{CO}}$ is greater than 1, meaning Basic-Matmul algorithm consumes more energy than the CO-Matmul algorithm.}
\label{fig:NaiveCO-Xeon}
\end{figure} 
\begin{figure}[!t] \centering
\resizebox{0.8\columnwidth}{!}{ \includegraphics{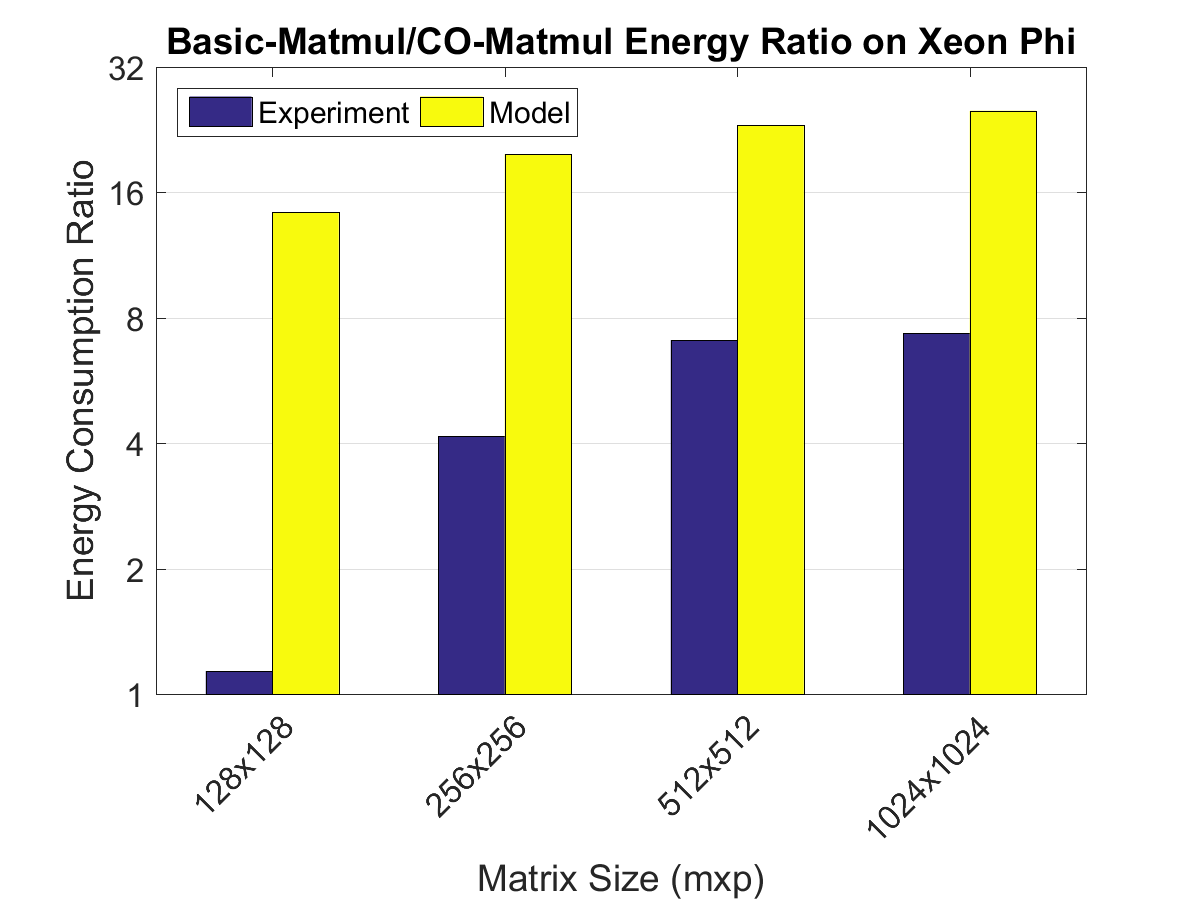}}
\caption{Energy consumption comparison between Basic-Matmul and CO-Matmul on the Intel Xeon Phi platform, computed by $\frac{E_{Basic}}{E_{CO}}$. Both the ICE model estimation and experimental measurement on Intel Xeon Phi platform show that $\frac{E_{Basic}}{E_{CO}}$ is greater than 1, meaning Basic-Matmul algorithm consumes more energy than the CO-Matmul algorithm.}
\label{fig:NaiveCO-XeonPhi}
\end{figure} 
The measurement data confirms that Basic-Matmul algorithm consumes more energy than CO-Matmul algorithm, shown by the energy consumption ratio between Basic-Matmul and CO-Matmul greater than 1 in the Figure \ref{fig:NaiveCO-Xeon} and \ref{fig:NaiveCO-XeonPhi}. For all input matrices, the ICE model has confirmed that Basic-Matmul consumes more energy than CO-Matmul algorithm.

\label{Matmul-validation}

\section{Related Work - Overview of energy models}
\label{related-work}
\comm{
\begin{table*}
\caption{Energy Model Details}
\label{table:energy-model-details}
\centering
\begin{tabular}{llllllll}
\hline\noalign{\smallskip}
\textbf{Study} 	 &\textbf{Parallel-} 	&\textbf{Applicability}  	 &\textbf{Validation} 	&\textbf{Communication}  	 &\textbf{Pre-run} 	 	 &\textbf{Application} \\
 	& \textbf{Algorithm } 	&\textbf{}  	  	&\textbf{} &\textbf{model}  	 &\textbf{Overhead} 	&\textbf{properties} \\ 	  
	 &\textbf{Support} &&&&&&\\						
\noalign{\smallskip}\hline\noalign{\smallskip}						
LEO \cite{Mishra:2015} 	&parallel 	&Yes 	     &Yes 	&No	&Yes	       	&None\\
\noalign{\smallskip}\hline\noalign{\smallskip}	
POET \cite{7108419} 	&parallel 	&Yes	    & Yes  	&No	&No	 	&None\\
\noalign{\smallskip}\hline\noalign{\smallskip}	
Koala \cite{Snowdon:2009} 	&parallel 	&Yes	     &Yes  	&No	&Yes&None \\
\noalign{\smallskip}\hline\noalign{\smallskip}	
Roofline        	&sequential	&Yes	     &Yes  	&Von Neumann	&No  	 &Operational  \\    
 \cite{Choi2013,Choi2014}		       	&&&       	&shared cached		 &&intensity	       \\
\noalign{\smallskip}\hline\noalign{\smallskip}
Energy	&parallel	&Yes	&No	&Message passing	&No 	&No. of messages\\
scalability \cite{Korthikanti2009}			&&&&&&				No. of computations\\
\noalign{\smallskip}\hline\noalign{\smallskip}				
Energy	&parallel	&Yes	&No	&CREW PEM	&No 	&No. of mem-accesses\\
scalability \cite{Korthikanti2010}							&&&&&&No. of computations\\
\noalign{\smallskip}\hline\noalign{\smallskip}	
Sequential 	&sequential	&Yes	     &Yes	&Uni-processor	&No	&Work complexity      \\
energy   			&&&	& with parallel	&	 	&I/O complexity\\
complexity \cite{Roy2013}   			&&	&	&memory-bank&	 	&\\
\noalign{\smallskip}\hline\noalign{\smallskip}	
Alonso	&parallel	&No(Dense matrix	     &Yes  	&No	&Yes	& Application tasks     \\
 et al. \cite{Alonso2014}		&&factorization)	&&&&		\\		
\noalign{\smallskip}\hline\noalign{\smallskip}	
Malossi 	&parallel	&No(Algebraic 	     &Yes  	&Shared memory	&Yes    	&No. of arithmetic, barrier      \\
et al. \cite{Malossi2015}	  	&& kernels)   					 &&&&mem-accesses, reduction\\
\noalign{\smallskip}\hline\noalign{\smallskip}	
ICE        	&parallel	&Yes	&Yes	&ICE	&No   	&Work, Span, I/O\\
model	     		&&&				&&&Input types    \\  
\noalign{\smallskip}\hline\noalign{\smallskip}	
\end{tabular}

\end{table*}
}
 
\comm{
We present the summary of existing modeling studies in Table \ref{table:energy-model-summary}. The characteristics of each approaches are extracted as the list of categories, including: whether the models support parallel algorithms (i.e., Parallel Algorithm Support), whether the model is applicable to general algorithms (i.e., Applicability), whether the model is validated (i.e., Validation), the communication model (i.e., Communication model), whether there is pre-run overhead before estimating energy consumption of applications (i.e., Pre-run Overhead) and how the model represents applications (i.e., App-properties). This summary is not an exhaustive survey on the topic of energy models. However, we believe the Table \ref{table:energy-model-details} represents the most current studies on energy models.
}
Energy models for finding energy-optimized system configurations for a given application have been recently reported [12, 16, 19]. Imes et al. \cite{7108419} used controller theory and linear programming to find energy-optimized configurations for an application with soft real-time constraints at runtime. 
Mishra et al. \cite{Mishra:2015} used hierarchical Bayesian model in machine learning to find  energy-optimized configurations. 
Snowdon et al. \cite{Snowdon:2009} developed a power management framework called Koala which models the energy consumption of the platform and monitors an application' energy behavior. 
Although the energy models for finding energy-optimized system configurations have resulted in energy saving in practice, they focus on characterizing system platforms rather than applications and therefore are not appropriate for analyzing the energy complexity of application algorithms. 

Another direction of energy modeling study is to predict the energy consumption of applications by analyzing applications without actual execution on real platforms which we classify as analytic models. 

Energy roofline models \cite{Choi2013, Choi2014} are some of the comprehensive energy models that abstract away possible algorithms in order to analyze and characterize different multicore platforms in terms of energy consumption. 
Our new energy model, which abstracts away possible multicore platform and characterize the energy complexity of algorithms based on their {\em work, span} and {\em I/O} complexity, complements the energy roofline models.  

Validated energy models for {\em specific} algorithms have been reported recently \cite{Alonso2014, Malossi2015}. Alonso et al. \cite{Alonso2014} provided an accurate energy model for three key dense matrix factorizations. Malossi et al. \cite{Malossi2015} focused on basic linear-algebra kernels and characterized the kernels by the number of arithmetic operations, memory accesses, reduction and barrier steps. Although the energy models for specific algorithms are accurate for the target algorithms, they are not applicable for other algorithms and therefore cannot be used as general energy complexity models for parallel algorithms.

The {\em energy scalability} of a parallel algorithm has been investigated by Korthikanti et al. \cite{Korthikanti2009, Korthikanti2010}. 
Unlike the energy scalability studies that have not been validated on real platforms, our new energy complexity model is validated on HPC and accelerator platforms, confirming its usability and accuracy.

The energy complexity of {\em sequential} algorithms on a {\em uniprocessor} machine with {\em several memory banks} has been studied by Roy et al. \cite{Roy2013}. Our energy complexity studies complement Roy et al.'s studies by investigating the energy complexity of {\em parallel} algorithms on a {\em multiprocessor} machine with {\em a shared memory bank} and private caches, a machine model that has been widely adopted to study parallel algorithms \cite{Frigo:2006, Arge:2008, Korthikanti2010}.

\section{Conclusion}
In this study, we have devised a new general model for analyzing the energy complexity of multithreaded algorithms. The energy complexity of an algorithm is derived from its \textit{work}, \textit{span} and \textit{I/O} complexity.
Moreover, two case studies are conducted to demonstrate how to use the model to analyze the energy complexity of SpMV algorithms and matmul algorithms. The energy complexity analyses are validated for two SpMV algorithms and two matmul algorithm on two HPC platforms with different input matrices. The experimental results confirm the theoretical analysis with respect to which algorithm consumes more energy. The ICE energy complexity model gives algorithm-developers the insight into which algorithm is analytically more energy-efficient. Improving the ICE model by considering the numbers of platform cores is a part of our future work.
\comm{
In the future, we would extend our work in two directions:
\begin{itemize}
\item We want to develop a run-time framework which can choose the least energy-consuming implementations among the available kernels at run-time using the proposed energy model.
\item Nowadays, there are executable frameworks that connect different platforms to one task scheduler. Selecting the most energy-efficient platforms or system configurations to run applications is one of the techniques to achieve energy optimization. 
In order to do so, the energy models need to be able to model the details of each platform. We want to improve the energy models to compare the energy consumption of applications between different platforms. 
\end{itemize}
}

\bibliographystyle{IEEEtranS}
\bibliography{PACT} 

\begin{thebibliography}{10}
\providecommand{\url}[1]{#1}
\csname url@samestyle\endcsname
\providecommand{\newblock}{\relax}
\providecommand{\bibinfo}[2]{#2}
\providecommand{\BIBentrySTDinterwordspacing}{\spaceskip=0pt\relax}
\providecommand{\BIBentryALTinterwordstretchfactor}{4}
\providecommand{\BIBentryALTinterwordspacing}{\spaceskip=\fontdimen2\font plus
\BIBentryALTinterwordstretchfactor\fontdimen3\font minus
  \fontdimen4\font\relax}
\providecommand{\BIBforeignlanguage}[2]{{%
\expandafter\ifx\csname l@#1\endcsname\relax
\typeout{** WARNING: IEEEtranS.bst: No hyphenation pattern has been}%
\typeout{** loaded for the language `#1'. Using the pattern for}%
\typeout{** the default language instead.}%
\else
\language=\csname l@#1\endcsname
\fi
#2}}
\providecommand{\BIBdecl}{\relax}
\BIBdecl

\bibitem{pOSKI}
``Poski: Parallel optimized sparse kernel interface,''
  http://bebop.cs.berkeley.edu/poski, accessed: 2015-11-17.

\bibitem{Alonso2014}
P.~Alonso, M.~F. Dolz, R.~Mayo, and E.~S. Quintana-Orti, ``Modeling power and
  energy consumption of dense matrix factorizations on multicore processors,''
  \emph{Concurrency Computat.}, 2014.

\bibitem{Arge:2008}
L.~Arge, M.~T. Goodrich, M.~Nelson, and N.~Sitchinava, ``Fundamental parallel
  algorithms for private-cache chip multiprocessors,'' in \emph{Procs of the
  Twentieth Annual Symp on Parallelism in Algorithms and Architectures}, ser.
  SPAA '08, 2008, pp. 197--206.

\bibitem{Asa06}
K.~Asanovic, R.~Bodik, B.~C. Catanzaro, J.~J. Gebis, P.~Husbands, K.~Keutzer,
  D.~A. Patterson, W.~L. Plishker, J.~Shalf, S.~W. Williams, and K.~A. Yelick,
  ``The landscape of parallel computing research: A view from berkeley,''
  \emph{Technical Report No. UCB/EECS-2006-183, University of California,
  Berkeley}, 2006.

\bibitem{Bender2010}
M.~Bender, G.~Brodal, R.~Fagerberg, R.~Jacob, and E.~Vicari,
  ``\BIBforeignlanguage{English}{Optimal sparse matrix dense vector
  multiplication in the i/o model},'' \emph{\BIBforeignlanguage{English}{Theory
  of Computing Systems}}, vol.~47, no.~4, pp. 934--962, 2010.

\bibitem{Buluc:2009}
A.~Bulu\c{c}, J.~T. Fineman, M.~Frigo, J.~R. Gilbert, and C.~E. Leiserson,
  ``Parallel sparse matrix-vector and matrix-transpose-vector multiplication
  using compressed sparse blocks,'' in \emph{Procs of the Twenty-first Annual
  Symp on Parallelism in Algorithms and Architectures}, ser. SPAA '09, 2009.

\bibitem{Choi2014}
J.~Choi, M.~Dukhan, X.~Liu, and R.~Vuduc, ``Algorithmic time, energy, and power
  on candidate hpc compute building blocks,'' in \emph{Procs of the 2014 IEEE
  28th Int Parallel and Distributed Processing Symp}, ser. IPDPS '14, 2014, pp.
  447--457.

\bibitem{Choi2013}
J.~W. Choi, D.~Bedard, R.~Fowler, and R.~Vuduc, ``A roofline model of energy,''
  in \emph{Procs of the 2013 IEEE 27th Int Symp on Parallel and Distributed
  Processing}, ser. IPDPS '13, 2013, pp. 661--672.

\bibitem{CormenLRS:2009}
T.~H. Cormen, C.~E. Leiserson, R.~L. Rivest, and C.~Stein, \emph{Introduction
  to Algorithms, Third Edition}.\hskip 1em plus 0.5em minus 0.4em\relax The MIT
  Press, 2009.

\bibitem{Davis:2011}
T.~A. Davis and Y.~Hu, ``The university of florida sparse matrix collection,''
  \emph{ACM Trans. Math. Softw.}, vol.~38, no.~1, pp. 1:1--1:25, 2011.

\bibitem{Forster2011}
\BIBentryALTinterwordspacing
M.~Forster and J.~Kraus, ``\BIBforeignlanguage{English}{Scalable parallel amg
  on ccnuma machines with openmp},''
  \emph{\BIBforeignlanguage{English}{Computer Science - Research and
  Development}}, vol.~26, no. 3-4, pp. 221--228, 2011. [Online]. Available:
  \url{http://dx.doi.org/10.1007/s00450-011-0159-z}
\BIBentrySTDinterwordspacing

\bibitem{FrigoLPR:1999}
M.~Frigo, C.~E. Leiserson, H.~Prokop, and S.~Ramachandran, ``Cache-oblivious
  algorithms,'' in \emph{Procs of the 40th Annual Symp on Foundations of
  Computer Science}, ser. FOCS, 1999.

\bibitem{Frigo:2006}
M.~Frigo and V.~Strumpen, ``The cache complexity of multithreaded cache
  oblivious algorithms,'' in \emph{Procs of the Eighteenth Annual ACM Symp on
  Parallelism in Algorithms and Architectures}, ser. SPAA '06, 2006, pp.
  271--280.

\bibitem{Gilbert:1992}
\BIBentryALTinterwordspacing
J.~R. Gilbert, C.~Moler, and R.~Schreiber, ``Sparse matrices in matlab: Design
  and implementation,'' \emph{SIAM J. Matrix Anal. Appl.}, vol.~13, no.~1, pp.
  333--356, Jan. 1992. [Online]. Available:
  \url{http://dx.doi.org/10.1137/0613024}
\BIBentrySTDinterwordspacing

\bibitem{HaTUTGRWA:2014}
P.~Ha, V.~Tran, I.~Umar, P.~Tsigas, A.~Gidenstam, P.~Renaud-Goud, I.~Walulya,
  and A.~Atalar, ``Models for energy consumption of data structures and
  algorithms,'' EU FP7 project EXCESS deliverable D2.1
  (http://www.excess-project.eu), Tech. Rep., 2014.

\bibitem{7108419}
C.~Imes, D.~Kim, M.~Maggio, and H.~Hoffmann, ``Poet: a portable approach to
  minimizing energy under soft real-time constraints,'' in \emph{Real-Time and
  Embedded Technology and Applications Symp (RTAS), 2015 IEEE}, 2015, pp.
  75--86.

\bibitem{Korthikanti2009}
V.~Korthikanti and G.~Agha, ``Analysis of parallel algorithms for energy
  conservation in scalable multicore architectures,'' in \emph{Int Conf on
  Parallel Processing, 2009. ICPP '09.}, 2009, pp. 212--219.

\bibitem{Korthikanti2010}
V.~A. Korthikanti and G.~Agha, ``Towards optimizing energy costs of algorithms
  for shared memory architectures,'' in \emph{Procs of the Twenty-second Annual
  ACM Symp on Parallelism in Algorithms and Architectures}, ser. SPAA '10,
  2010, pp. 157--165.

\bibitem{Kotlyar:1997}
V.~Kotlyar, K.~Pingali, and P.~Stodghill, ``A relational approach to the
  compilation of sparse matrix programs,'' Tech. Rep., 1997.

\bibitem{Kumar:2016:EAM:2959355.2959420}
P.~Kumar, A.~Gurtov, and P.~H. Ha, ``An efficient authentication model in smart
  grid networks,'' in \emph{Procs of the 15th Int Conf on Information
  Processing in Sensor Networks}, 2016, pp. 65:1--65:2.

\bibitem{7568416}
J.~Lagravière, J.~Langguth, M.~Sourouri, P.~H. Ha, and X.~Cai, ``On the
  performance and energy efficiency of the pgas programming model on multicore
  architectures,'' in \emph{Procs of the Int Workshop on Optimization of Energy
  Efficient HPC \& Distributed Systems (OPTIM-HPCS)}, 2016, pp. 800--807.

\bibitem{chris-eehco14}
L.~Li and C.~Kessler, ``Validating energy compositionality of {GPU}
  computations,'' in \emph{HIPEAC Workshop on Energy Efficiency with
  Heterogeneous Computing (EEHCO)}, 2015.

\bibitem{Malossi2015}
A.~C.~I. Malossi, Y.~Ineichen, C.~Bekas, A.~Curioni, and E.~S. Quintana-Orti,
  ``Systematic derivation of time and power models for linear algebra kernels
  on multicore architectures,'' \emph{Sustainable Computing: Informatics and
  Systems}, vol.~7, pp. 24 -- 40, 2015.

\bibitem{Mishra:2015}
N.~Mishra, H.~Zhang, J.~D. Lafferty, and H.~Hoffmann, ``A probabilistic
  graphical model-based approach for minimizing energy under performance
  constraints,'' in \emph{Procs of the Twentieth Int Conf on Architectural
  Support for Programming Languages and Operating Systems}, ser. ASPLOS '15,
  2015, pp. 267--281.

\bibitem{Roy2013}
S.~Roy, A.~Rudra, and A.~Verma, ``An energy complexity model for algorithms,''
  in \emph{Procs of the 4th Conf on Innovations in Theoretical Computer
  Science}, ser. ITCS '13, 2013, pp. 283--304.

\bibitem{Snowdon:2009}
D.~C. Snowdon, E.~Le~Sueur, S.~M. Petters, and G.~Heiser, ``Koala: A platform
  for os-level power management,'' in \emph{Procs of the 4th ACM European
  Conference on Computer Systems}, ser. EuroSys '09, 2009.

\bibitem{TranSamos2016}
V.~N.~N. Tran, B.~Barry, and P.~H. Ha, ``Power models supporting
  energy-efficient co-design on ultra-low power embedded systems,'' in
  \emph{Procs of the Intl Conf on Embedded Computer Systems: Architectures,
  Modeling, and Simulation (SAMOS XVI), in press.}, 2016.

\bibitem{Tran2016}
------, ``Rthpower: Accurate fine-grained power models for predicting
  race-to-halt effect on ultra-low power embedded systems,'' in \emph{Procs of
  the 17th IEEE Int Symp on Performance Analysis of Systems and Software},
  2016, pp. 155--156.

\bibitem{Umar:2015}
I.~Umar, O.~J. Anshus, and P.~H. Ha, ``Deltatree: A locality-aware concurrent
  search tree,'' in \emph{Procs of the 2015 ACM SIGMETRICS Int Conf on
  Measurement and Modeling of Computer Systems}, 2015, pp. 457--458.

\bibitem{Umar20162}
------, ``Effect of portable fine-grained locality on energy efficiency and
  performance in concurrent search trees,'' in \emph{Procs of the 21st ACM
  SIGPLAN Symp on Principles and Practice of Parallel Programming}, 2016, pp.
  36:1--36:2.

\bibitem{Umar2016}
------, ``Greenbst: An energy-efficient concurrent search tree,'' in
  \emph{Proc. of the 22nd Intl. European Conf. on Parallel and Distributed
  Computing (Euro-Par '16)}, 2016, pp. 502--517.

\bibitem{Williams2009}
S.~Williams, A.~Waterman, and D.~Patterson, ``Roofline: An insightful visual
  performance model for multicore architectures,'' \emph{Commun. ACM}, vol.~52,
  no.~4, pp. 65--76, 2009.

\bibitem{Yelick2004}
\BIBentryALTinterwordspacing
K.~Yelick. (2004, Sept) Cs 267 parallel matrix multiplication. [Online].
  Available: \url{http://www.cs.berkeley.edu/~yelick/cs267}
\BIBentrySTDinterwordspacing

\end{thebibliography}

\end{document}